\newtheorem{lemma}{Lemma}[section]
\newtheorem{proposition}{Proposition}[section]
\newtheorem{theorem}{Theorem}[section]
\newcommand{\ToT}{\mbox{\scriptsize ToT}}
\journal{DAMath}
\begin{document}

\begin{frontmatter}



\title{The Sackin Index of Simplex 
Networks}


\author[inst1]{Louxin Zhang}

\affiliation[inst1]{organization={Department of Mathematics and Centre of Data Science and Machine Learning, National University of Singapore},
addressline={10 Lower Kent Ridge Road},
           country={Singapore 119076}
          }



\begin{abstract}
A phylogenetic network is a simplex (or 1-component tree-child) network if the child of every reticulation node is a  network leaf. Simplex networks are  a superclass of phylogenetic trees and a subclass of tree-child networks. Generalizing  the Sackin index to phylogenetic networks, we prove that 
the expected Sacking index of a random  simplex network is asymptotically $\Omega (n^{7/4})$ in the uniform model. 
\end{abstract}

\begin{keyword}
phylogenetic networks \sep tree-child networks \sep simplex networks, Sackin index
\end{keyword}

\end{frontmatter}


\section{Introduction}
\label{sec:introduction}

Phylogenetic networks have been frequently used for modeling evolutionary history of genomes and genetic flow in population genetics. Since network models are much more complex than phylgoenetic trees,   different classes of  phylgoenetic networks have been introduced to investigate different issues of reconstruction of phylogenetic networks \cite{Cardona2009, galled_trees, galled_net,Francis2015}.  For these special classes of phylogenetic networks, algorithmic problems for determining  the relationship between phylogenetic trees and networks and for reconstruction of phylogenetic networks from DNA sequences, gene trees and other data  have been extensively studied \cite{gusfield2014book, huson2010book,gunawan2017IC}. 

The combinatorial and stochastic properties of different classes of phylogenetic networks has received increasing attention in the study of phylogenetic networks recently.   
Counting tree-child networks was first studied in \cite{McDiarmid2015}. 
A tight asymptotic value of the number of tree-child networks is  given in \cite{fuchs2021}.   Although  algorithms are  presented for enumerating tree-child networks \cite{cardona2019generation,zhang2019}, closed formulas and even simple recurrence  formulas for counting  tree-child networks  are unknown \cite{cardona2020,pons2021}.
Counting ranked tree-child networks is studied in 
\cite{bienvenu2020}. 
In addition, asymptotic and exact counts of galled trees and galled networks are given in \cite{bouvel2020counting} and \cite{gunawan2020,fuchs2022}, respectively. 

The expected height of random binary trees has been known for decades \cite{flajolet1982average}.
 Recently, the problem of computing  the  height of random phylogenetic networks is raised in \cite{McDiarmid2015,fuchs2022,stufler2021}.  
The Sackin index of a phylogenetic tree is defined to be the sum of the depths of its leaves \cite{Sackin1972,shao1990tree}. It is one of the widely-used indices used for measuring the balance of phylogenetic trees and testing evolutionary models \cite{avino2019tree,shao1990tree,xue2020scale}. In this paper, we will prove that  the expected  Sackin index of a random simplex network  on $n$ taxa (which are networks such that the child of each reticulation node is a leaf)  is $\Omega (n^{7/4})$ in the uniform model, which is significantly larger than the Sackin index of phylogenetic trees on $n$ taxa \cite{steel2016phylogeny}. 

The rest of this paper is divided into three parts. 
In Section~\ref{sec2}, basic concepts and notation of phylogenetic networks are introduced. In particular, we define the depth of nodes and the Sackin index of phylogenetic networks. In Section~\ref{sec3}, we present the bound of the expected Sacking index of a random simplex network in the uniform model that is mentioned above. In Section~\ref{sec4},
we conclude the study with several remarks and open questions.

\section{Basic concepts and notation}
\label{sec2}

\subsection{Tree-child networks}

For convenience,  we consider ``planted" phylogenetic networks over taxa (Figure~\ref{Fig1_example}). Such phylogenetic networks over a set $X$ of $n$ taxa are acyclic rooted graphs in which (i) the {\it root} is of out-degree 1, (ii) there are $n$ nodes of indegree 1 and outdegree 0, called the {\it leaves},   that are labeled one-to-one by the taxa of $X$, and (ii) all the other  nodes are of degree 3.

Each degree-3 node is called a {\it tree node}  if it is of out-degree 2 and indegree-1; it is a {\it reticulate node}  if it is of indegree 2 and out-degree 1. Note that binary  phylogenetic trees are simply  phylogenetic networks with no reticulate nodes. An edge $(p, q)$ is a {\it tree edge} if $q$ is either a tree node or a leaf; it is a {\it reticulation edge} if $q$ is a reticulation node.  

Let $N$ be a phylogenetic network.
We use $\rho$ to denote the root of $N$, ${\cal E}(N)$ to denote the set of edges. We also use  ${\cal L}(N)$,  ${\cal R}(N)$ and ${\cal T}(N)$ to denote the set of the leaves,  reticulation and tree  nodes, respectively. 

Let $u, v\in \{\rho\}\cup {\cal L}(N)\cup {\cal R}(N)\cup {\cal T}(N)$.  If $(u, v)\in {\cal E}(N)$, $u$ is said to be a {\it parent} of $v$ and,  alternatively,  $v$ is a {\it child} of $u$). If there is a path from the network root to $v$ that passes $u$, $u$ is said to be an {\it ancestor} of $v$ and,  alternatively,  $v$ is a {\it descendant} of $u$. 

Let 
$e'=(p, q) \in {\cal E}(N)$ and $ e''=(s, t)\in {\cal E}(N)$. The edge 
$e'$  is said to be {\it above} $e''$  if $q$ is an ancestor of $s$, denoted by $e'\prec e''$. The edges $e'$ and $e''$  are said to be {\it parallel}, denoted by $e' \| e''$,  if neither of $e'$ and $e''$ is above the other.

A phylogenetic network is {\it simplex} if and only if the child of every reticulation is a leaf. The middle network in Figure~\ref{Fig1_example} is simplex,  where  the child of the reticulations are Leaf 3 and 4.

A phylogenetic network is {\it tree-child} if 
every non-leaf node has a child that is either a tree node or a leaf.  In Figure~\ref{Fig1_example}, the right phylogenetic network is a tree-child network with $2$ reticulation nodes. A binary phylogenetic network is tree-child if and only if for every  node $v$, there exists a leaf $\ell$ such that $v$ and $\ell$ are connected by a path consisting of tree edges. 


It is easy to see that  phylogenetic trees are simplex networks, whereas  simplex networks are tree-child networks. Simplex networks are also called 1-component tree-child networks 
\cite{cardona2020}. 

\begin{figure}[t!]
\centering
\includegraphics[scale=2]{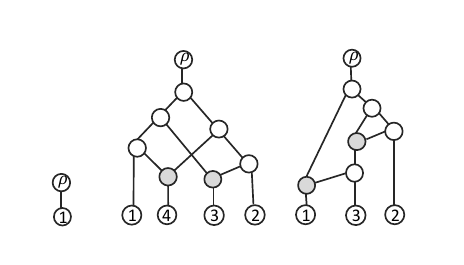}
\caption{ 
{A phylogenetic network with a single leaf (left), a simplex network on 4 taxa and a tree-child network on 3 taxa that is not simplex (right), where reticulation nodes are represented by filled circles.}   
}
\label{Fig1_example} 
\end{figure}

In this paper, we shall  use the following facts frequently without mention of them. 

\begin{theorem}
 Let $r(N)$, $\ell(N)$ denote the number of the reticulation nodes and leaves of a tree-child network $N$, respectively. Then, 
 \begin{enumerate}
 \item[{\rm (1)}] $|{\cal T}(N)|= \ell(N)+r(N)-1$.

 \item[{\rm (2)}]
 There are exactly $2\ell(N)+r(N)-1$ tree edges and $2r(N)$ reticulation edges in $N$. 
 \end{enumerate}
\end{theorem}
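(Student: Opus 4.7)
The plan is a double-counting argument based on the handshake identity for directed graphs: the total number of edges of $N$ equals both the sum of the in-degrees and the sum of the out-degrees taken over all nodes. Since every non-root, non-leaf node has one of exactly two fixed degree signatures (tree: in-degree $1$, out-degree $2$; reticulation: in-degree $2$, out-degree $1$), each side of the identity reduces to a simple linear expression in $\ell(N)$, $r(N)$, and $|{\cal T}(N)|$, which will pin down all three counts once the edge set is partitioned appropriately.

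For part (1), I would write
$$\sum_{v} d^{-}(v) \;=\; 0 + \ell(N) + |{\cal T}(N)| + 2r(N),$$
where the four terms are the contributions of the root, the leaves, the tree nodes, and the reticulation nodes respectively. Similarly,
$$\sum_{v} d^{+}(v) \;=\; 1 + 0 + 2|{\cal T}(N)| + r(N).$$
Equating the two expressions (both equal to $|{\cal E}(N)|$) and solving for $|{\cal T}(N)|$ yields $|{\cal T}(N)| = \ell(N) + r(N) - 1$.

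For part (2), I would partition ${\cal E}(N)$ by the type of the head of each edge. By definition a tree edge is one whose head is a tree node or a leaf, so the number of tree edges equals the total in-degree over ${\cal T}(N)\cup {\cal L}(N)$, which is $|{\cal T}(N)| + \ell(N) = 2\ell(N)+r(N)-1$ after substituting the identity from (1). A reticulation edge is one whose head is a reticulation node, so the number of reticulation edges equals the total in-degree over ${\cal R}(N)$, which is $2r(N)$.

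There is essentially no obstacle here: the argument is bookkeeping enforced by the rigid degree structure of a planted binary phylogenetic network, and in fact the tree-child hypothesis plays no role in these identities. The only subtlety worth flagging is the planted convention, namely that the root has out-degree $1$ and in-degree $0$; this single unit of asymmetry between the two sums is exactly what produces the $-1$ appearing in both formulas.
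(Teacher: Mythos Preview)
Your argument is correct. The paper does not actually supply a proof of this theorem: it is stated as a known fact (``we shall use the following facts frequently without mention of them'') with no accompanying proof environment, so there is nothing to compare against beyond noting that your handshake-identity bookkeeping is the standard and entirely adequate justification.
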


\subsection{Node depth, network height and Sackin index}

Let $T$ be a phylogenetic tree and $u\in {\cal V}(T)\cup {\cal L}(T)$. The {\it depth} of $u$ is defined to be the number of edges in the unique path from the tree root $\rho$ to $u$, which is also  equal to the number of ancestors of $u$.  Obviously, the depth of the tree root is 0.  The {\it height} of $T$ is defined to be the largest depth of a leaf in $T$.

In a phylogenetic network $N$, there are more than one path from the root $\rho$ to a specific descendant of a reticulation node.  We generalize the concept of node depth to phylogenetic networks as follows.  

Let $u$ be a node of $N$. The {\it depth} of a node $u$ is defined to be the number of edges in the longest path from the root $\rho$ to $u$, written as $d_N(u)$. The {\it ancestor number} of a node  $u$ is defined to be the number of the ancestors of the node, written as $\alpha_{N}(u)$. 
For example,  the depth and ancestral number of Leaf 4 are five are six, respectively,  in the right phylogenetic network in Figure~\ref{Fig2_example}. 

For a tree-child network $N$, we define the following parameters:
\begin{itemize}
    \item  The {\it height}  of $N$ is defined to be the largest depth of a leaf, denoted by $h(N)$. 
    
     \item  The {\it Sackin index} of $N$ is defined to be the sum of the depths of its leaves, denoted by $K(N)$.
\end{itemize}

Given a family ${\cal F}$ of tree-child networks. The {\it expected height} of a network of the family ${\cal F}$ in the uniform model is defined by: 
$$ \overline{H}({\cal F})= \frac{1}{|{\cal F}|} \sum_{N\in {\cal F}} h(N),$$
where $|{\cal F}|$ is the cardinality of the set $\cal F$.

The {\it expected Sackin index} of a  random network in ${\cal F}$ in the  is defined by:
$$ \overline{K}({\cal F})= \frac{1}{|{\cal F}|} \sum_{N\in {\cal F}} K(N).$$

\begin{theorem}
  Under the uniform model,
 \begin{enumerate}
   \item[{\rm (1)}] (\cite{flajolet1982average}) The expected height of a random phylogenetic tree over $n$ taxa is asymptotically $2\sqrt{\pi n}$.
   
   \item[{\rm (2)}] The expected Sackin index \footnote{The expected Sackin index of a phylogenetic tree on $n$ is different from that reported in literature by $n$ (see \cite{king2021simple,mir2013new} for example). The reason is that we work on "planted" phylogenetic trees.} of a random phylogenetic tree  on $n$ taxa is $\frac{2^{2n-2}n!(n-1)!}{(2n-2)!}$,
   whose  is asymptotically $\sqrt{\pi} n^{3/2}$.
   
 \end{enumerate}
\end{theorem}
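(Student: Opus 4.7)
Part (1) is cited from \cite{flajolet1982average} and requires no new work, so the plan below addresses only the exact formula and its asymptotic in part (2). The strategy is a symmetry reduction to the depth of a single leaf, followed by a one-step recurrence derived from a uniform-insertion construction of planted phylogenetic trees.

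\textbf{Symmetry reduction.} Let ${\cal T}_n$ denote the set of planted phylogenetic trees on $n$ taxa. Relabeling taxa is a bijection on ${\cal T}_n$ that preserves the uniform measure, so all leaves have the same average depth and
\[
\overline{K}({\cal T}_n) \;=\; n\cdot \overline{d}(\ell_1), \qquad \overline{d}(\ell_1) := \frac{1}{|{\cal T}_n|}\sum_{T\in {\cal T}_n} d_T(\ell_1),
\]
for any fixed leaf $\ell_1$. Computing $\overline{d}(\ell_1)$ therefore suffices.

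\textbf{Uniform-insertion coupling.} A tree in ${\cal T}_n$ can be built by starting with the planted one-leaf tree on $\{\ell_1\}$ (which has one edge) and, for $k=2,\dots,n$ in turn, subdividing a uniformly chosen edge of the current tree (which has $2k-3$ edges) in order to attach $\ell_k$ as a new leaf. Pruning $\ell_n,\ell_{n-1},\dots,\ell_2$ in reverse order inverts this construction, so it is a bijection between insertion sequences and planted trees; in particular $|{\cal T}_n|=\prod_{k=2}^{n}(2k-3)=(2n-3)!!$, and the uniform-insertion process realizes the uniform measure on ${\cal T}_n$.

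\textbf{Recurrence and closed form.} Let $D_k$ be the depth of $\ell_1$ after step $k$, with $D_1=1$. At step $k$, $D_k=D_{k-1}+1$ precisely when the subdivided edge lies on the root-to-$\ell_1$ path, which contains exactly $D_{k-1}$ of the $2k-3$ edges of the current tree; hence
\[
E[D_k\mid D_{k-1}] \;=\; D_{k-1}+\frac{D_{k-1}}{2k-3} \;=\; \frac{2k-2}{2k-3}\,D_{k-1}.
\]
Iterating and using $(2n-3)!! = (2n-2)!/\bigl(2^{n-1}(n-1)!\bigr)$ gives
\[
\overline{d}(\ell_1) \;=\; \prod_{k=2}^{n}\frac{2k-2}{2k-3} \;=\; \frac{2^{n-1}(n-1)!}{(2n-3)!!} \;=\; \frac{2^{2n-2}\bigl((n-1)!\bigr)^2}{(2n-2)!},
\]
so multiplying by $n$ yields the stated exact value $2^{2n-2}n!(n-1)!/(2n-2)!$. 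For the asymptotic, rewrite this as $n\cdot 4^{n-1}/\binom{2n-2}{n-1}$ and apply the central binomial estimate $\binom{2n-2}{n-1}\sim 4^{n-1}/\sqrt{\pi(n-1)}$ to obtain $\overline{K}({\cal T}_n)\sim n\sqrt{\pi(n-1)}\sim \sqrt{\pi}\,n^{3/2}$.

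\textbf{Main obstacle.} The algebra and the Stirling step are mechanical. The only substantive point is the coupling in the second step: one must verify that the sequential insertion procedure is a genuine bijection from insertion sequences to $\mathcal{T}_n$, so that the uniform-insertion measure actually coincides with the uniform distribution on trees, and that the label symmetry is strong enough to reduce the expectation to a single marked leaf even though the insertion scheme appears to privilege $\ell_1$. Once those reductions are established, the recurrence is a one-line urn argument and the closed form and asymptotic follow.
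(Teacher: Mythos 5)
Your proof is correct, and it is worth noting that the paper itself does not prove this theorem at all: part (1) is a citation to \cite{flajolet1982average}, and part (2) is presented as a known fact from the literature (the footnote only explains the additive offset of $n$ caused by working with planted trees). So your proposal supplies a self-contained argument where the paper supplies a reference. The route you take --- reduce to the expected depth of a single marked leaf by label symmetry, realize the uniform distribution on planted trees by sequential uniform edge subdivision, and track the marked leaf's depth through the resulting urn-type recurrence $E[D_k\mid D_{k-1}]=\frac{2k-2}{2k-3}D_{k-1}$ --- is sound: the insertion process is invertible by pruning the highest-labelled leaf, each tree arises from exactly one insertion history of probability $\prod_{k=2}^{n}(2k-3)^{-1}=1/(2n-3)!!$, and the symmetry step is legitimate because relabelling preserves the uniform measure even though the construction appears to privilege $\ell_1$. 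The closed form $n\prod_{k=2}^{n}\frac{2k-2}{2k-3}=\frac{2^{2n-2}n!(n-1)!}{(2n-2)!}$ checks out on small cases ($n=2$ gives $4$, $n=3$ gives $8$), and the asymptotic $n\cdot 4^{n-1}/{2n-2\choose n-1}\sim\sqrt{\pi}\,n^{3/2}$ follows from the central binomial estimate. This argument also harmonizes with the paper's own machinery: it is essentially the $j=0$ specialization of the edge-insertion enumeration of Section 3.1, and a reader could equally derive part (2) from the paper's Theorem 3.1 via $A_C({\cal OC}_{n,0})=2^n n!-\frac{(2n-1)!}{2^{n-1}(n-1)!}$ together with the tree case of Proposition 3.3, though your direct recurrence is more elementary and gives the exact constant rather than bounds.
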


\section{ The expected Sackin index of random simplex networks}
\label{sec3}

In this section, we will use an enumeration procedure and a simple counting formula for simplex networks that appear in \cite{cardona2020} to obtain the asymptotic  Sackin index of a simplex network in the uniform model. 

\subsection{Enumerating simplex networks} 

We first briefly introduce  a procedure for enumerating simplex networks appearing in \cite{cardona2020}.  Let ${\cal OC}_{n}$ denote the class of simplex networks on $n$ taxa and
$o_{n}=|{\cal OC}_{n}|$.

Let $N\in {\cal OC}_{n}$. $N$ contains  It may contain 0 to $n-1$ reticulations. Recall that  the child of each reticulation is a leaf. 
All the tree nodes and leaves that are not below any reticulations are connected to the root by tree edges, forming a connected subtree, which we call the {\it top tree component} and  denote  by $C(N)$ (see \cite{gunawan2017IC}). For instance, the top tree component of the simplex network in the middle of Figure 1 consists of Leaf 1, Leaf 4 and their ancestors,  including $\rho$. 

Let $[i, j]$ denote the integer set $\{i, i+1, \cdots, j\}$, where $0<i\leq j$.
For any nonempty $I\subseteq [1, n]$, ${\cal OC}_{I,n}$ denotes the subset of simplex networks in which there are $n-|I|$ reticulations whose child are labeled uniquely with the elements of $[1, n] \setminus  I$.
Clearly, when $I=[1, n]$, ${\cal OC}_{I, n}$ is just  the set of phylogenetic trees on $[1, n]$.

For simplicity,  we write ${\cal OC}_{k, j}$ for 
${\cal OC}_{[1, k], k+j}$ for $j>1$.
The networks of ${\cal OC}_{k, j+1}$ can be generated by attaching the two grandparents of Leaf $j+1$ to each network $N$ of ${\cal OC}_{k, j}$ in the following two ways \cite{cardona2020}:
\begin{enumerate}
\item For each tree edge $e=(u, v)$ of $C(N)$, where $u\not\in {\cal R}(N)$, subdivide $e$ into  $(u, p), (p, q)$ and $(q, v)$,  and add three new edges
 $(p, r), (q, r)$ and $(r, j+1)$, This insertion operation is shown on the left network in Figure~\ref{Fig2_example}, where $j+1=4$. 
\item For each pair of tree edges $e'=(u, v)$ and 
   $e''=(s, t)$ of $C(N)$, i.e. $u\not\in {\cal R}(N)$ and $s\not\in {\cal R}(N)$, subdivide $e'$ into $(u, p)$ and 
   $(p, v)$ and $e''$ into $(s, q)$ and $(q, t)$,
   and add three new edges $(p, r), (q, r)$ and $(r, j+1)$. This insertion operation is shown in the right  network in Figure~\ref{Fig2_example}. 
\end{enumerate}

\begin{figure}[t!]
\centering
\includegraphics[scale=2]{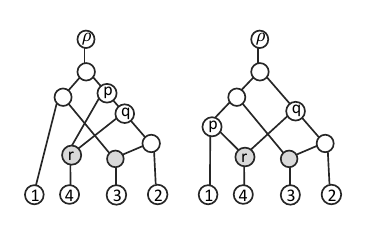}
\caption{ 
The Illustration of the procedure for enumerating simplex  networks by  attaching the grandparents ($p, q$) of a new leaf (4) below a reticulation node ($r$) onto either a tree edge (left) or a pair of tree edges (right) in the top tree component of a simplex network. 
}
\label{Fig2_example} 
\end{figure}

Each $N$ of ${\cal OC}_{k, j}$ contains exactly $2(k+j)-1$ tree edges in its $C(N)$. Additionally,  all the networks generated using the above method are distinct \cite{cardona2020}. This implies that 
$|{\cal OC}_{k, j+1}|=(2(k+j)-1)(k+j)
|{\cal OC}_{k, j}|$ and thus
\begin{eqnarray}
 |{\cal OC}_{k, j}|= \frac{(2(k+j)-2)!}{2^{k+j-1}(k-1)!},
 \label{eqn1_zlx}
\end{eqnarray}
from which  the simple formula for counting the phylogenetic trees is obtained by setting $j$ to 0.

By symmetry,  ${\cal OC}_{ I, n}={\cal OC}_{J,n}$ if $|I|=|J|$. 
Therefore, we obtain \cite{cardona2020}: 
\begin{eqnarray}
 o_{n}= \sum^{n}_{k=1} {n \choose k} |{\cal OC}_{k, n-k}|
  =\frac{(2n-2)!}{2^{n-1}}\sum^{n}_{k=0} {n \choose k} \frac{1}{(k-1)!}
  \label{total_count}
\end{eqnarray}

\subsection{A formula for the total depths of the nodes in the top tree component} 

Recall that $\alpha(u)$ denotes the number of ancestors of a node $u$. Let  
$\delta_{C(N)}(u)$ denote the number of descendants of $u$ that are in $C(N)$.  We also use  
$\mbox{ToT}(N)$ to denote the tree edges of $C(N)$.  We define:
 \begin{eqnarray}
   A_{C}(N)=\sum_{v\in C(N)} \alpha_{N}(v) =\sum_{u\in C(N)} \delta_{C(N)}(u) \label{eqn4_count}
 \end{eqnarray}
and
\begin{eqnarray}
 A_C({\cal OC}_{k, j})=
\sum_{N\in {\cal OC}_{k, j}}A_C(N).
\end{eqnarray}


 
 \begin{lemma}
 Assume  $N \in {\cal OC}_{k, j}$ and let $n=k+j$. For each $e$ of $\mbox{\rm ToT}(N)$, we use $N(e)$ to denote the network obtained from $N$ by applying the first approach to $e$.
 Then, 
 \begin{eqnarray}
  \sum_{e\in ToT(N)} A_{C}(N(e))
  = (2n+3)A_{C}(N)+(2n-1) \label{eqn5_count}
  %
  %
\end{eqnarray}
 \end{lemma}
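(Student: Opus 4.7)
The plan is to track how $A_C$ changes locally when the first insertion operation is applied to a single tree edge $e\in\mbox{ToT}(N)$, and then to sum the local changes using the double-counting identity~(\ref{eqn4_count}).

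First I would identify $C(N(e))$. When $e=(u,v)$ is subdivided by the two new nodes $p,q$, and the new reticulation $r$ and leaf $j+1$ are attached via $(p,r),(q,r),(r,j+1)$, both $p$ and $q$ are new tree nodes with no reticulation above them, while $r$ and the new leaf lie below the reticulation $r$. Hence $C(N(e))=C(N)\cup\{p,q\}$. Consequently, the ancestor counts change in exactly three ways: (i)~every node of $\{v\}\cup(\text{descendants of }v\text{ in }C(N))$ acquires the two new ancestors $p,q$; (ii)~$\alpha_{N(e)}(p)=\alpha_N(u)+1$; and (iii)~$\alpha_{N(e)}(q)=\alpha_N(u)+2$. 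Adding these contributions gives
\[
 A_C(N(e)) = A_C(N) + 2\bigl(1+\delta_{C(N)}(v)\bigr) + \bigl(2\alpha_N(u)+3\bigr) = A_C(N) + 2\delta_{C(N)}(v) + 2\alpha_N(u) + 5.
\]

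Next I would sum over all $e=(u,v)\in\mbox{ToT}(N)$. Because $C(N)$ is a tree with $|C(N)|=1+(n+j-1)+k=2n$ nodes, it contains $|\mbox{ToT}(N)|=2n-1$ tree edges, and each non-root node of $C(N)$ is the endpoint of a unique edge of $\mbox{ToT}(N)$. Hence, using $\delta_{C(N)}(\rho)=|C(N)|-1=2n-1$,
\[
 \sum_{(u,v)\in\mbox{ToT}(N)}\delta_{C(N)}(v) = \sum_{v\in C(N),\,v\ne\rho}\delta_{C(N)}(v) = A_C(N)-(2n-1),
\]
and similarly, using $\alpha_N(v)=\alpha_N(u)+1$ along each tree edge of $C(N)$,
\[
 \sum_{(u,v)\in\mbox{ToT}(N)}\alpha_N(u) = \sum_{v\in C(N),\,v\ne\rho}\bigl(\alpha_N(v)-1\bigr) = A_C(N)-(2n-1).
\]
Plugging these back yields $(2n-1)A_C(N)+4\bigl(A_C(N)-(2n-1)\bigr)+5(2n-1)=(2n+3)A_C(N)+(2n-1)$, which is~(\ref{eqn5_count}).

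The only real obstacle is careful bookkeeping in the first step: correctly pinpointing which nodes of $C(N)$ gain exactly the two new ancestors $p$ and $q$ (so that $u$, the ancestors of $u$, and the nodes of $C(N)$ outside the subtree below $v$ are not over-counted), and correctly evaluating $\alpha_{N(e)}(p)$ and $\alpha_{N(e)}(q)$ while remembering that $p$ precedes $q$ on the subdivided path. Once the local increment is secured, the collapse to the closed form in~(\ref{eqn5_count}) is routine.
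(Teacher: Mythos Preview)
Your proposal is correct and follows essentially the same approach as the paper: compute the local increment $A_C(N(e))-A_C(N)$ for a single edge and then sum using the double-counting identity~(\ref{eqn4_count}). The only cosmetic difference is that you express the increment via $\alpha_N(u)$ (the tail of $e$) whereas the paper uses $\alpha_N(v)$ (the head); since $\alpha_N(v)=\alpha_N(u)+1$ along a tree edge of $C(N)$, your formula $A_C(N)+2\delta_{C(N)}(v)+2\alpha_N(u)+5$ coincides with the paper's $A_C(N)+2\delta_{C(N)}(v)+2\alpha_N(v)+3$, and the final summation collapses identically to~(\ref{eqn5_count}).
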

 \begin{proof}
 Let $e=(u, v)\in \mbox{ToT}(N)$. By the description of the first approach, the set of nodes and edges of $N(e)$ are respectively: 
   $${\cal V}(N(e))={\cal V}(N)\cup \{p, q, r, n+1\}$$
   and  
   $${\cal E}(N(e))={\cal E}(N)\cup \{(u, p), (p, q), (q, v), (p, r), (q, r), (r, n+1)\} \setminus \{(u, v)\}$$
   (see Figure~\ref{Fig2_example}).
   
   Let ${\cal D}(v)$ be the set of descendants of $v$ in $C(N)$.
   We have the following facts: 
   \begin{eqnarray*}
    |\alpha_{N(e)}(x)|=2+|\alpha_{N}(x)|, && x\in {\cal D}(v)\\
    |\alpha_{N(e)}(y)|=|\alpha_{N}(y)|, &&  y\not\in {\cal D}_N(v)\cup\{v, p, q\}\\
    |\alpha_{N(e)}(p)|=|\alpha_{N}(v)|,&&\\
     |\alpha_{N(e)}(q)|=|\alpha_{N}(v)|+1,&&\\
     |\alpha_{N(e)}(v)|=|\alpha_{N}(v)|+2.&&
     \end{eqnarray*}
Bu summing above equations, we obtain:
\begin{eqnarray}
 A_C(N(e))=A_{C}(N)+2\delta_{C(N)}(v)+ 2\alpha_{N}(v)+3.  \label{eqn8_count}
\end{eqnarray}
Eqn.~(\ref{eqn4_count}) and (\ref{eqn8_count}) imply  that 
\begin{eqnarray*}
  && \sum_{e\in \ToT(N)} A_{C}(N(e))\\
  &=&  (2n-1)A_{C}(N) +  2\sum_{(u, v)\in \ToT(N)} \delta_{C(N)}(v)+ 2\sum_{(u, v)\in \ToT(N)} \alpha_{N}(v) + 3\\
  &=&(2n-1)A_{C}(N)+2\left(\sum_{ v\in C(N)} \delta_{C(N)}(v) - \delta_{C(N)}(\rho)\right) +2A_{C}(N)  + 3(2n-1)\\
   &=& (2n+3) A_{C}(N) + (2n-1),
\end{eqnarray*}
where $n=k+j$ and we use the fact that 
$\delta_{C(N)}(\rho)=2n-1$ for the network root $\rho$.


    

 \end{proof}
 

\begin{lemma}
 Let  $N\in {\cal OC}_{k, j}$ and $N(e',e'')$ denote the network obtained from $N$ using the second approach to a pair of distinct edges  $e', e''$ of $C(N)$.
 Then, 
 {\small 
 \begin{eqnarray}
  \sum_{\{e',e''\}\subset ToT(N)} A_{C}(N(e', e''))
  =  (2n^2+n-2) A_C(N) -(2n-1), \label{eqn10_count}
\end{eqnarray}
}
where $n=k+j$.
 \end{lemma}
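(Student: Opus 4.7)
The plan is to follow the template of the proof of Lemma~3.1, with the extra bookkeeping required since two tree edges are subdivided simultaneously. Fix an unordered pair $\{e',e''\}\subset\ToT(N)$ with $e'=(u,v)$ and $e''=(s,t)$, and analyze $A_C(N(e',e''))$ by tracking how ancestor sets change. The starting observation is that the top tree component of $N(e',e'')$ equals $C(N)\cup\{p,q\}$: the new tree nodes $p$ and $q$ sit above the newly added reticulation $r$ and thus belong to the top tree component, whereas $r$ and the new leaf $n+1$ do not.

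For an existing node $x\in C(N)$, the only candidates for new ancestors in $N(e',e'')$ are $p$, $q$, and $r$. Since the only descendant of $r$ is $n+1\not\in C(N(e',e''))$, $r$ contributes nothing; and $p$ (resp.\ $q$) is an ancestor of $x$ precisely when $x\in\{v\}\cup{\cal D}_N(v)$ (resp.\ $x\in\{t\}\cup{\cal D}_N(t)$). Summing over $x\in C(N)$ contributes $\delta_{C(N)}(v)+\delta_{C(N)}(t)+2$. For the two new tree nodes, the parent of $p$ in $N(e',e'')$ is $u$, and $u$ gains $q$ as an ancestor exactly when $u\in\{t\}\cup{\cal D}_N(t)$, yielding $\alpha_{N(e',e'')}(p)=\alpha_N(v)+[u\in\{t\}\cup{\cal D}_N(t)]$; a symmetric expression holds for $q$. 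Combining these yields the per-pair identity
\begin{eqnarray*}
 A_C(N(e',e''))-A_C(N) &=& \delta_{C(N)}(v)+\delta_{C(N)}(t)+\alpha_N(v)+\alpha_N(t)+2 \\
 && {}+[u\in\{t\}\cup{\cal D}_N(t)]+[s\in\{v\}\cup{\cal D}_N(v)],
\end{eqnarray*}
which is manifestly symmetric in $(e',e'')$.

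Next I would sum this identity over the ${2n-1 \choose 2}$ unordered pairs, using the fact that each tree edge of $\ToT(N)$ occurs in $2n-2$ pairs, together with the two identities already used in Lemma~3.1, namely $\sum_{(u,v)\in\ToT(N)}\alpha_N(v)=A_C(N)$ and $\sum_{(u,v)\in\ToT(N)}\delta_{C(N)}(v)=A_C(N)-(2n-1)$ (the latter via $\delta_{C(N)}(\rho)=2n-1$). The delicate term is the indicator sum: the two indicators can never both equal $1$ for a single pair, and summing them over unordered pairs is the same as summing one of them over all ordered distinct pairs $(e',e'')$. For fixed $e'=(u,v)$, the edges $e''=(s,t)$ with $t$ an ancestor-or-equal of $u$ are in bijection with the $\alpha_N(u)=\alpha_N(v)-1$ non-root nodes on the path from $\rho$ to $u$, so the total indicator contribution is $\sum_{(u,v)\in\ToT(N)}\alpha_N(u)=A_C(N)-(2n-1)$. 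Adding back the baseline ${2n-1 \choose 2}A_C(N)$ and collecting terms gives, after routine algebra, the claimed $(2n^2+n-2)A_C(N)-(2n-1)$.

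The main obstacle is the indicator piece, which has no analogue in Lemma~3.1. The degenerate cases $v=s$ (then $e'$ is immediately above $e''$), $u=t$ (then $e''$ is immediately above $e'$) and $u=s$ (both edges fan out from the same vertex, so neither indicator fires) need to be checked individually to make sure that the closed descendant notation $\{\cdot\}\cup{\cal D}_N(\cdot)$ tallies each comparable pair of edges exactly once and each non-comparable pair zero times; beyond this sanity check, the derivation is a direct calculation.
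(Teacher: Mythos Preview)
Your argument is correct and follows essentially the same route as the paper: compute the per-pair change $A_C(N(e',e''))-A_C(N)$ and then sum, using the key identity that the number of ordered comparable edge pairs equals $A_C(N)-(2n-1)$. The only difference is packaging: where the paper splits explicitly into the cases $e'\prec e''$ and $e'\|e''$ (obtaining $+3$ versus $+2$ and then recombining via auxiliary sums such as $\sum(\delta+\alpha)^2$ that ultimately cancel), your indicator term $[u\in\{t\}\cup{\cal D}_N(t)]+[s\in\{v\}\cup{\cal D}_N(v)]$ unifies the two cases and makes the summation more direct.
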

 \begin{proof}
 After attaching the reticulation node  $r$ onto the edges  
 $e'=(u, v)$ and $e''=(s, t)$, the tree edge $e'$ is divided into $e'_1=(u, p)$ and $e''_1=(p, v)$;   the edge $e''$ is divided into $e'_2=(s, q)$ and $e''_2=(q, t)$, where $p, q$ are the parents of $r$ in $N(e', e'')$. 
 We consider  two possible cases. 
 
 First, we consider the case that $e'\prec e''$.
 In $N(e', e'')$, for each descendant $x$  of $u$ in $C(N(e', e'')$  such that $x\neq p$ and $x$ is not below  
 $s$,
  $\alpha(x)$ increases by 1 because of the subdivision of $e'$. 
 For each descendant $y$ of $s$ in $C(N(e', e'')$ such that   $y\neq q$, 
  $\alpha(y)$ increases by 2 because of the subdivision of $e'$ and $e''$.
Additionally, 
  $$\alpha_{N(e', e'')}(p)=\alpha_{N}(v)$$
  and because $q$ is below $v$,
  $$ \alpha_{N(e', e'')}(q)=\alpha_N(t)+1.$$
 For any tree node or leaves that are not the descendants of $u$, $\alpha (x)$ remains the same. Summing all together, 
 we have that 
 \begin{eqnarray}
   A_{C}(N(e', e'')) 
   = A_C(N) + \delta_{C(N)}(v) + \delta_{C(N)}(q)  + \alpha_{N}(v)
     + \alpha_N(q)+3,
     \label{eqn10_1}
 \end{eqnarray}
 where 3 is the sum of  the total increase of 
 $\alpha(v)$, $\alpha(q)$ and $\alpha(t)$.

 Summing over all the comparable edge pairs, we have
 \begin{eqnarray}
  &&\sum_{e', e''\in \ToT(N): e'\prec e''} A_{C}(N(e', e'')) \nonumber \\
  &=& (A_C(N)+3)|\{(e', e'') \;|\; e'\prec e''\}| + \sum_{(u, v)\in \ToT(N)} (\delta_N(v)+\alpha_N(v))\delta_{C(N)}(v)  \nonumber \\
  && + \sum_{(u, v)\in \ToT(N)} (\delta_{C(N)}(v)+\alpha_N(v))(\alpha_N(v)-1) \nonumber \\
   &=& (A_C(N)+3)|\{(e', e'')\;|\; e'\prec e''\}| 
   + \sum_{v\in C(N)\setminus\{\rho\}} (\delta_{C(N)}(v)+\alpha_N(v))^2 \nonumber \\
  && 
   - \sum_{v\in C(N)\setminus\{\rho\}} (\delta_{C(N)}(v)+\alpha_N(v)) \nonumber \\
  &=& (A_C(N)+3)|\{(e', e'')\;|\; e'\prec e''\}| 
   + \sum_{v\in C(N)\setminus\{\rho\}} (\delta_{C(N)}(v)+\alpha_N(v))^2 \nonumber \\
  && 
   - 2A_C(N)+(2n-1),
   \label{eqn12_count}
 \end{eqnarray}
 where we use the fact that
 $\sum_{v\in C(N)\setminus\{\rho\}} \delta_N(v)=A_C(N)-(2n-1)$.

 Second, we assume  $e'\| e''$.  
 In $N(e', e'')$, for each descendant $x$  of $u$ in $C(N(e', e''))$,
  $\alpha(x)$ increases by 1 because of the subdivision of $e'$. 
 For each descendant $y$ of $p$ in $C(N(e', e''))$ , 
  $\alpha(y)$ increases by 1  because of the subdivision of $e''$.
Additionally, 
  $\alpha_{N(e', e'')}(p)=\alpha_{N}(v)$
  and 
  $ \alpha_{N(e', e'')}(q)=\alpha_N(t).$
 For any tree node or leaves that are the descendants of neither $u$ nor $p$, $\alpha(x)$ remains the same.
 Hence, 
 \begin{eqnarray}
  A_C(N(e', e''))=A_C(N)+\delta_{C(N)}(v)+\delta_{C(N)}(q)+\alpha_{N}(v)+\alpha_N(q)+2, \label{eqn10_2}
 \end{eqnarray} 
 where $2$ counts for the increase by 1 of both $\alpha(v)$ and $\alpha(t)$.

 Summing Eqn.~(\ref{eqn10_2}) over parallel edge pairs, we obtain:
  \begin{eqnarray}
 &&\sum_{e', e''\in \ToT(N): e'\| e''} A_{C}(N(e', e''))  \nonumber \\
 &=&  (A_C(N)+2) \left|\{(e', e''):\; e'\|e''\}\right| \nonumber \\
  && + \sum_{(u, v)\in \ToT(N)} 
   o_N((u, v)) (\delta_{C(N)}(v)+\alpha_N(v)),  
   \label{eqn13_count}
 \end{eqnarray}
 where $o_{N}((u, v))$ is the number of the edges of $C(N)$ that are parallel to $(u, v)$.

 Summing Eqn.~(\ref{eqn12_count}) and (\ref{eqn13_count}) and using the facts that $n=k+j$ and $|\{(e', e'')\;|\; e'\prec e''\}|=A_C(N)-(2n-1)$ and 
 $$\sum_{(u, v)\in \ToT(N)} 
    (\delta_{C(N)}(v)+\alpha_N(v))=2A_C(N) - \delta_{C(N)}(\rho)=2A_C(N)-(2n-1),$$ we obtain:
  \begin{eqnarray}
 && \sum_{\{e', e''\}\subset ToT(N)} A_{C}(N(e', e'')) \nonumber  \\
 &=& (A_C(N)+2) {2n-1\choose 2} + 
 |\{(e', e'')\;|\; e'\prec e''\}|  -2A_C(N)+(2n-1) \nonumber \\
 &&
   + (2n-1) \sum_{(u, v)\in \ToT(N)} 
    (\delta_{C(N)}(v)+\alpha_N(v)) \nonumber \\
  &=& (A_C(N)+2) {2n-1\choose 2} - A_C(N) 
   + (2n-1) (2A_C(N) - (2n-1)) \nonumber\\
   &=& A_C(N) (2n^2+n-2)  -(2n-1). \nonumber
  \end{eqnarray}
 \end{proof}

\begin{theorem}
Let $k\geq 1$ and $j\geq 0$.
  For the subclass of simplex networks with $j$ reticulations whose children are $k+1, \cdots, k+j$ on $n=k+j$ taxa,  
  \begin{eqnarray}
  A_C({\cal OC}_{k, j}) =\frac{(2n)!}{2^j(2k)!}  \left( 
  2^kk!- \frac{(2k-1)!}{2^{k-1}(k-1)!} \right).
  \label{eqnXXX_count}
  \end{eqnarray}
\end{theorem}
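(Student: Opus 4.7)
The plan is to combine Lemmas 3.1 and 3.2 into a multiplicative recurrence in $j$, compute the $j=0$ base case (phylogenetic trees on $k$ taxa) separately, and then iterate.

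For the recurrence step, by the enumeration procedure recalled in Section~\ref{sec3}, every $N' \in {\cal OC}_{k, j+1}$ arises uniquely from some $N \in {\cal OC}_{k, j}$ as either $N(e)$ for a single tree edge $e$ of $C(N)$ or as $N(e',e'')$ for an unordered pair of distinct such edges. Adding Eqn.~(\ref{eqn5_count}) and Eqn.~(\ref{eqn10_count}), the additive constants $+(2n-1)$ and $-(2n-1)$ cancel, and the $A_C(N)$-coefficients sum to $(2n+3)+(2n^2+n-2)=(n+1)(2n+1)$, where $n=k+j$. Summing over $N \in {\cal OC}_{k, j}$ yields the purely multiplicative recurrence
\begin{equation}
A_C({\cal OC}_{k, j+1}) = (k+j+1)(2k+2j+1)\,A_C({\cal OC}_{k, j}). \label{eqn_plan1}
\end{equation}

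For the base case, we exploit the tree structure: for a planted binary tree $T$ on $k$ taxa, rewriting $A_C(T) = \sum_v \alpha_T(v)$ as a sum over edges gives $A_C(T) = \sum_{v\neq\rho}s(v)$, where $s(v)$ is the size of the subtree rooted at $v$. Since the subtree rooted at any non-root node is a full binary tree with $\lambda(v)$ leaves, $s(v) = 2\lambda(v) - 1$, and $\sum_{v\neq\rho}\lambda(v) = \sum_{\ell \in {\cal L}(T)} \alpha_T(\ell) = K(T)$ by swapping the summation. Hence
\[
A_C(T) = 2K(T) - (2k-1).
\]
Summing over $T \in {\cal OC}_{k, 0}$ and invoking Theorem 2.2(2) to get $\sum_T K(T) = \overline{K}({\cal OC}_{k, 0}) \cdot |{\cal OC}_{k, 0}| = 2^{k-1} k!$ produces
\[
A_C({\cal OC}_{k, 0}) = 2^k k! - \frac{(2k-1)!}{2^{k-1}(k-1)!}.
\]

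Finally, iterating (\ref{eqn_plan1}) gives $A_C({\cal OC}_{k, j}) = A_C({\cal OC}_{k, 0}) \prod_{i=0}^{j-1} (k+i+1)(2(k+i)+1)$. The product telescopes: $\prod_{i=0}^{j-1}(k+i+1) = n!/k!$, and $\prod_{i=0}^{j-1}(2(k+i)+1)$ comprises the odd factors in $(2n)!/(2k)!$, whose even counterparts multiply to $2^j n!/k!$, so the product equals $(2n)!/(2^j(2k)!)$. Substituting the base case yields the claimed formula. The delicate step is the first one: one must observe both the cancellation of the $\pm(2n-1)$ constants from Lemmas 3.1 and 3.2 and the factorization $(2n+3)+(2n^2+n-2)=(n+1)(2n+1)$. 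Without these two happy accidents, the recurrence would not be purely multiplicative, and no closed form in factorials would emerge.
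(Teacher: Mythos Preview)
Your proof is correct and follows the paper's own argument almost verbatim: combine Lemmas~3.1 and~3.2 to obtain the multiplicative recurrence $A_C({\cal OC}_{k,j+1})=(n+1)(2n+1)A_C({\cal OC}_{k,j})$, then iterate from the $j=0$ base case. The one difference is that the paper simply cites \cite{zhang2019} for the value of $A_C({\cal OC}_{k,0})$, whereas you supply a self-contained derivation via the identity $A_C(T)=2K(T)-(2k-1)$ together with Theorem~2.2(2); this is a nice bonus that makes the argument independent of the external reference, but it does not change the overall strategy.
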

\begin{proof}
  Summing Eqn.~(\ref{eqn5_count}) and (\ref{eqn10_count}), we obtain
  \begin{eqnarray*}
  \sum_{e\in  ToT(N)} A_C(N(e))+ \sum_{\{e',e''\}\subset ToT(N)} A_{C}(N(e', e''))
  =  (2n+1)(n+1) A_C(N) 
  \end{eqnarray*}
  for each $N\in {\cal OC}_{k, j}$. 
  Summing the above equation over all networks of ${\cal TC}_{k, j}$, we have
  \begin{eqnarray*}
  A_C({\cal OC}_{k, j+1})
  = (2n+1)(n+1)\sum _{N\in {\cal OC}_{k, j}} A_C(N)
 = A_C({\cal OC}_{k, j})
  \end{eqnarray*}
 or,  equivalently, 
 $
  A_C({\cal OC}_{k, j})
  = n(2n-1)A_C({\cal OC}_{k, j-1}).
  $
  Since it is proved in \cite{zhang2019} that
  \begin{eqnarray*}
 A_C({\cal OC}_{k, 0})= 2^kk!- \frac{(2k-1)!}{2^{k-1}(k-1)!}, 
 \label{eqnxxx20_count}
  \end{eqnarray*}
  we obtain Eqn.~(\ref{eqnXXX_count}) by induction.
\end{proof}
 

 \subsection{The expected total c-depth of random simplex networks}
 
 The {\it expected total c-depth} of  simplex networks with $j$ reticulations on $k+j$ leaves is defined as: 
 \begin{eqnarray}
 D_{C}(k, j)=\frac{ {k+j \choose k} \sum_{N\in {\cal OC}_{k,j}}A_{C}(N)}{{k+j\choose k} |{\cal OC}_{k,j}|} =\frac{ A_{C}({\cal OC}_{k,j})}{ |{\cal OC}_{k,j}|}
 \end{eqnarray}
 in the uniform model.
 For each $n\geq 1$,  the expected total c-depth of simplex networks on $n$ taxa becomes:
 \begin{eqnarray}
 \overline{D}_C(n) =
 \left( \sum^{n}_{k=1}{n \choose k} 
  A_{C}({\cal OC}_{k,n-k})
 \right)
 \mathbin{/}
 \left( \sum^{n}_{k=1}{n\choose k} |{\cal OC}_{k,n-k}|\right)
 \end{eqnarray}
 in the uniform model.

 \begin{proposition}
   For any $k\geq 1$ and $j\geq 0$, 
   the average total component depth  
   $D_C(k, j)$ has the following asymptotic value. 
   \begin{eqnarray}
    D_C(k, j)=\frac{\sqrt{\pi}n(2n-1)}{\sqrt{k}} \left(1 - \pi^{-1/2}k^{-1/2} + O(k^{-1})\right), \label{eqn16_mean}
   \end{eqnarray}
   where $n=k+j$.
 \end{proposition}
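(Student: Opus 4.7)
The plan is to divide the closed-form expressions we already have. By the definition of $D_C(k,j)$ and the formulas in Theorem~3.1 and Eqn.~(\ref{eqn1_zlx}), I would first write
\begin{eqnarray*}
D_C(k,j) &=& \frac{A_C({\cal OC}_{k,j})}{|{\cal OC}_{k,j}|}
 = \frac{(2n)!}{2^j(2k)!}\Bigl(2^k k! - \tfrac{(2k-1)!}{2^{k-1}(k-1)!}\Bigr)
   \cdot \frac{2^{n-1}(k-1)!}{(2n-2)!},
\end{eqnarray*}
and use $2^{n-1}/2^j = 2^{k-1}$ together with $(2n)!/(2n-2)! = 2n(2n-1)$ to reduce this to
\begin{eqnarray*}
D_C(k,j) = 2n(2n-1)\cdot \frac{2^{k-1}(k-1)!}{(2k)!}\Bigl(2^k k! - \tfrac{(2k-1)!}{2^{k-1}(k-1)!}\Bigr).
\end{eqnarray*}

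Second, I would split the right-hand side into the two natural summands. The second summand simplifies algebraically to $-n(2n-1)/k$ since the factorials telescope. The first summand rearranges to
\begin{eqnarray*}
2n(2n-1)\cdot \frac{2^{2k-1}(k-1)!\,k!}{(2k)!}
 = \frac{n(2n-1)}{k}\cdot\frac{2^{2k}(k!)^2}{(2k)!}.
\end{eqnarray*}

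Third, I would invoke the standard asymptotic for the central binomial coefficient, namely $\binom{2k}{k} = \tfrac{4^k}{\sqrt{\pi k}}(1+O(k^{-1}))$, which is equivalent to $\tfrac{2^{2k}(k!)^2}{(2k)!} = \sqrt{\pi k}\,(1+O(k^{-1}))$. Substituting this into the first summand yields $\frac{\sqrt{\pi}\,n(2n-1)}{\sqrt{k}}(1+O(k^{-1}))$. Combining the two summands and factoring out $\sqrt{\pi}\,n(2n-1)/\sqrt{k}$ gives exactly the expansion
\begin{eqnarray*}
D_C(k,j) = \frac{\sqrt{\pi}\,n(2n-1)}{\sqrt{k}}\Bigl(1 - \pi^{-1/2}k^{-1/2} + O(k^{-1})\Bigr),
\end{eqnarray*}
as claimed.

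I do not anticipate a genuine obstacle: the argument is almost entirely bookkeeping on factorials. The only place that requires a bit of care is ensuring the error terms are absorbed correctly. Specifically, the $-n(2n-1)/k$ term is of order $k^{-1}$ relative to the main $k^{-1/2}$ term, so it contributes at precisely the level where the claimed $\pi^{-1/2}k^{-1/2}$ correction lives; meanwhile the $O(k^{-1})$ slack in the Stirling expansion of the central binomial coefficient feeds only into the $O(k^{-1})$ remainder. I would therefore keep the two contributions separate during the simplification and only merge them at the very last step to make the order of each correction transparent.
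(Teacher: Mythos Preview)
Your proposal is correct and follows essentially the same route as the paper: both divide the closed forms from Theorem~3.1 and Eqn.~(\ref{eqn1_zlx}), simplify to $D_C(k,j)=\tfrac{n(2n-1)}{k}\bigl(\tfrac{2^{2k}(k!)^2}{(2k)!}-1\bigr)$, and then insert the Stirling-type asymptotic for the central binomial coefficient. The only cosmetic difference is that the paper keeps the explicit $(8k)^{-1}$ term in the expansion of $\binom{2k}{k}$, whereas you absorb it into $O(k^{-1})$ from the outset; since the stated remainder is $O(k^{-1})$, your level of precision is exactly what is needed.
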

 \begin{proof}
  Since ${2k \choose k} = 
  \frac{2^{2k}}{\sqrt{\pi k}}(1-(8k)^{-1}+O(k^{-2})),$ by
  Eqn.~(\ref{eqn1_zlx}),
  \begin{eqnarray*}
   D_C(k, j)&=&
     \frac{(2n)!}{2^{n-k}(2k)!}
    \left( 
  2^kk!- \frac{(2k-1)!}{2^{k-1}(k-1)!} \right)/\left(
  \frac{(2n-2)!}{2^{n-1}(k-1)!}\right)\\
  &=&  \frac{n(2n-1)}{k}
    \left( 
  \frac{2^{2k}k!k!}{(2k)!}- 1 \right)\\
  &=&  \frac{n(2n-1)}{k}\left(\frac{\sqrt{\pi k}} {1-(8k)^{-1}+O(k^{-2})}-1 \right)\\
  &=& \frac{\sqrt{\pi}n(2n-1)}{\sqrt{k}} \left(1-\frac{1}{\sqrt{\pi k}}+\frac{1}{8k} + O(k^{-2})\right).
  \end{eqnarray*} 
  Thus, Eqn.~(\ref{eqn16_mean}) is proved.
  \end{proof}
  

  \begin{proposition}
   For any $n>1$, 
    $\overline{D}_C(n)$ has the following bounds.
    {\small 
   \begin{eqnarray}
    (\sqrt{2}-1)
       (1+O(n^{-1/4})) \leq \frac{\overline{D}_C(n)}{\sqrt{\pi}n^{3/4}(2n-1)}
       \leq 2
       (1+O(n^{-1/4})).
       \label{eqn16_prop32}
   \end{eqnarray}
   }
 \end{proposition}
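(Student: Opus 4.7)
The plan is to rewrite $\overline{D}_C(n)$ as a weighted expectation and reduce the bound to controlling $E[1/\sqrt{k}]$ under the induced probability distribution on $k$. Substituting the formula $|{\cal OC}_{k,n-k}| = (2n-2)!/[2^{n-1}(k-1)!]$ into the definition of $\overline{D}_C(n)$ gives
\[
\overline{D}_C(n) = \sum_{k=1}^{n} p_k\, D_C(k, n-k), \qquad p_k \propto \binom{n}{k}\big/(k-1)!,
\]
so that the common factors in $|{\cal OC}_{k,n-k}|$ cancel in the ratio. Proposition~3.1 provides $D_C(k,n-k) = \sqrt{\pi}\,n(2n-1)/\sqrt{k}\,(1+O(k^{-1/2}))$, and since (by the concentration argument below) the weights $p_k$ live in the range $k \asymp \sqrt{n}$, the $O(k^{-1/2})$ becomes $O(n^{-1/4})$ and yields
\[
\overline{D}_C(n) = \sqrt{\pi}\,n(2n-1)\,E[1/\sqrt{k}]\,(1+O(n^{-1/4})).
\]
Thus the proposition reduces to proving $(\sqrt{2}-1)\,n^{-1/4}(1+O(n^{-1/4})) \le E[1/\sqrt{k}] \le 2\,n^{-1/4}(1+O(n^{-1/4}))$.

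The second step is to locate and analyze the peak of $p_k$. The ratio $p_{k+1}/p_k = (n-k)/[k(k+1)]$ equals $1$ at $k^{*} = -1+\sqrt{n+1}\sim\sqrt{n}$, so $p_k$ is unimodal with mode near $\sqrt{n}$. Telescoping $\log(p_k/p_{k^*})$ and Taylor expanding gives $\log(p_k/p_{k^*}) \approx -(k-k^*)^2/\sqrt{n}$, showing that $p_k$ has an essentially Gaussian shape with standard deviation $\sigma \asymp n^{1/4}$. Consequently, the tail probabilities $\Pr[k \le K]$ and $\Pr[k \ge K']$ decay super-polynomially whenever $|K-k^*|$ or $|K'-k^*|$ exceeds a constant multiple of $n^{1/4}$.

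Both bounds on $E[1/\sqrt{k}]$ then follow from suitable truncations. For the upper bound, take $K_1 = \lceil \sqrt{n}/4 \rceil$, so that $1/\sqrt{K_1+1} \le 2\,n^{-1/4}(1+o(1))$, and estimate
\[
E[1/\sqrt{k}] \le \Pr[k \le K_1] + \frac{\Pr[k>K_1]}{\sqrt{K_1+1}} \le 2\,n^{-1/4}(1+o(1)),
\]
since $\Pr[k \le K_1]$ is super-polynomially small by concentration. For the lower bound, take $K_2 = \lceil (3+2\sqrt{2})\sqrt{n}\rceil$, engineered so that $1/\sqrt{K_2} = (\sqrt{2}-1)\,n^{-1/4}(1+o(1))$ via the identity $3+2\sqrt{2}=(\sqrt{2}+1)^2$; restricting to $k \le K_2$ gives
\[
E[1/\sqrt{k}] \ge \frac{\Pr[k \le K_2]}{\sqrt{K_2}} = (\sqrt{2}-1)\,n^{-1/4}(1-o(1)),
\]
since $\Pr[k>K_2]$ is likewise super-polynomially small.

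The main technical obstacle is carrying out the concentration step rigorously. Because $p_k$ is not literally Gaussian but has a combinatorial origin with a mildly asymmetric tail, the Laplace-type expansion of $\log p_k$ around $k^*$ must be controlled uniformly over a window wide enough to cover both $K_1$ and $K_2$, with quantitative tail bounds that are at least $o(n^{-1/4})$ so they are absorbed into the error term from Proposition~3.1. A minor subtlety is that Proposition~3.1 is an asymptotic statement in $k$, so one should handle the small-$k$ tail ($k = O(1)$) separately, e.g.\ using the uniform bound $D_C(k,n-k) \le D_C(1,n-1) = n(2n-1)$ combined with the super-polynomial smallness of the tail weight there.
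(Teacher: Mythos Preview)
Your proposal is correct in outline and reaches the stated constants, but it proceeds along a different route from the paper's own argument.

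\textbf{What the paper does.} The paper never appeals to a Gaussian or Laplace-type concentration. Instead it sets $k_0=\sqrt{n+1}-1$ and proves three elementary ratio facts about $S_k=\binom{n}{k}|{\cal OC}_{k,n-k}|$ (Proposition~A.1): $S_k$ is increasing on $[1,k_0]$, decreasing on $(k_0,2k_0)$, and satisfies $S_k>2S_{k+1}$ for $k\ge 2k_0$. It also uses that $D_C(k,n-k)$ is monotone decreasing in $k$ (Proposition~A.2). With these, it bounds the weighted average separately on $[1,k_0]$ and on $(k_0,n]$. On $[1,k_0]$ the monotonicities give an average between $D_C(k_0,n-k_0)$ and $\frac{1}{k_0}\sum_{k\le k_0}D_C(k,n-k)$, and the Riemann-sum estimate $\sum_{k\le k_0}k^{-1/2}\sim 2\sqrt{k_0}$ produces the upper constant~$2$. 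On $(k_0,n]$ the geometric tail bound beyond $2k_0$ reduces matters to the interval $[k_0,2k_0]$, where the analogous Riemann sum $\sum_{k_0\le k\le 2k_0}k^{-1/2}\sim 2(\sqrt{2}-1)\sqrt{k_0}$, together with a factor $1/2$ from the tail comparison, produces the lower constant $\sqrt{2}-1$.

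\textbf{What you do differently.} You recast $\overline{D}_C(n)$ as $\sqrt{\pi}\,n(2n-1)\,E[1/\sqrt{k}]$ up to a $(1+O(n^{-1/4}))$ factor and then bound $E[1/\sqrt{k}]$ by truncating at thresholds $K_1=\sqrt{n}/4$ and $K_2=(3+2\sqrt{2})\sqrt{n}$ chosen so that $1/\sqrt{K_1}=2n^{-1/4}$ and $1/\sqrt{K_2}=(\sqrt{2}-1)n^{-1/4}$; the truncation errors are controlled by concentration of $p_k$ around $k^\ast\sim\sqrt{n}$ with width $n^{1/4}$. This is cleaner conceptually and would generalise well, but the concentration step you flag as ``the main technical obstacle'' is exactly what the paper sidesteps: its Proposition~A.1 is in effect the minimal rigorous input you need (the ratio $p_{k+1}/p_k=(n-k)/[k(k+1)]$ is at least $4$ for $k\le\sqrt{n}/2$ and at most $1/4$ for $k\ge 2k_0$), and plugging those same ratio bounds into your framework would finish your argument without any Taylor expansion of $\log p_k$. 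So the two proofs are close cousins; the paper's is more self-contained as written, while yours isolates the probabilistic content more transparently.
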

  
  \begin{proof}
  Let $ S_{k}= {n \choose k}|{\cal OC}_{k, n-k}|=  {n\choose k} \frac{(2n-2)!}{2^{n-1}(k-1)!} $ and $k_0=\sqrt{n+1}-1$. 
  
 By considering the ratio $S_{k+1}/S_{k}$, we can show (in Appendix A):
 \begin{itemize}
     \item $S_{k}\leq S_{k+1}$ for $k\in [1, k_0]$;
     \item  $S_{k} > S_{k+1}$ if $k_0<k<2k_0$;  
  \item  
  $S_{k}>2S_{k+1}$ if $k\geq 2k_0$.
  \end{itemize}

  First, we define  $$
  f_1=\left(\sum_{k\leq k_0} D_C(k, n-k) S_k\right) \mathbin{/} \left(\sum_{k\leq k_0}S_k\right).$$
  Since 
  $D_C(k, n-k)$ is a decreasing function (which is proved in Appendix A), 
  \begin{eqnarray*}
  f_1
  \geq D_C({\lfloor k_0\rfloor}, n-{\lfloor k_0\rfloor})=\sqrt{\pi}n^{3/4}(2n-1)(1+O(n^{-1/4}))
  \end{eqnarray*}
  Furthermore, 
  since $S_k$ is increasing on $[1, k_0)$, 
  \begin{eqnarray*}
  f_1
    &\leq & \frac{1}{\lfloor k_0\rfloor}\sum_{k \leq \lfloor k_0\rfloor}D_C(k, n-k)\\
    &=& \frac{\sqrt{\pi}n(2n-1)}{\lfloor k_0\rfloor}\sum_{k\leq \lfloor k_0\rfloor}
     \left(k^{-1/2}-\sqrt{\pi} k^{-1} 
     + (1/8) k^{-3/2} + O(k^{-5/2}\right)\\
     &=&  \frac{\sqrt{\pi}n(2n-1)}{\lfloor k_0\rfloor} 
     \left[ 2(\lfloor k_0\rfloor^{1/2}-1) -\sqrt{\pi} \ln \lfloor k_0\rfloor   + O(\lfloor k_0\rfloor^{-1/2})\right]\\
     &=& 2\sqrt{\pi}n^{3/4}(2n-1) (1+ O(n^{-1/4}))
  \end{eqnarray*}

 Second, define
  \begin{eqnarray}
  f'=\frac{\sum_{k\in [k_0, 2k_0]} D_C(k, n-k) S_k}{\sum_{k\in [k_0, 2k_0]}S_k}.
  \label{eqn17_mean}
  \end{eqnarray}
 Since the
   $D_C(k, n-k)$ is a decreasing function  and $S_k$ is  decreasing on $[k_0, 2k_0]$,   
  $$f'\leq D_C(\lceil k_0\rceil, n-\lfloor k_0\rfloor) = \sqrt{\pi}n^{3/4}(2n-1)(1+O(n^{-1/4}))$$  and, by Eqn.~(\ref{eqn16_mean}), 
  \begin{eqnarray*} 
  f'&\geq & \frac{1}{k_0} \sum_{k\in [k_0, 2k_0]} D_C(k, n-k)\\
  &=& \frac{\sqrt{\pi}n(2n-1)}{k_0} \sum_{k\in[k_0, 2k_0]} \left(k^{-1/2}-\sqrt{\pi} k^{-1} 
     + (1/8) k^{-3/2} + O(k^{-5/2}\right)\\
    &=& \frac{\sqrt{\pi}n(2n-1)}{k_0} \left[ 2 ( \sqrt{2}-1) k_0^{1/2} 
    - \sqrt{\pi} \ln 2 + 
      (1/4) (1-\sqrt{1/2}) k_0^{-1/2} + O(k_0^{-1})\right]\\
      &=& 2(\sqrt{2}-1)\sqrt{\pi}n^{3/4}(2n-1)
       (1+O(n^{-1/4}))
  \end{eqnarray*} 
  
  Now, we consider 
  $$ f_2=\frac{\sum_{k > k_0}D_C(k, n-k)S_{k}
    }{\sum _{k>k_0}S_k}.$$
  For each $k> 2k_0$, $D_C(k, n-k) \leq D_C(2k_0, n-2k_0) \leq f'$.  This implies that  
  $$\sum_{k>2k_0}D_C(k, n-k)S_k \leq f'
  \sum_{k>2k_0}S_k$$ and thus  
  \begin{eqnarray*}
    f_2 & =&  \frac{\sum_{k_0\leq k\leq 2k_0}D_C(k, n-k)S_{k}
   + \sum _{k>2k_0}D_C(k, n-k)S_k }{\sum _{k>k_0}S_k} \\
   &=& \frac{f'\sum_{k_0\leq k\leq 2k_0}S_{k}
   + \sum _{k>2k_0} D_C(k, n-k)S_k}{\sum _{k>k_0}S_k} \\
   &\leq& \frac{f'\sum_{k_0\leq k\leq 2k_0}S_{k}
   + f' \sum _{k>2k_0} S_k}{\sum _{k>k_0}S_k} \\
   &=& f'
  \end{eqnarray*}
  On the other hand, 
  $S_{k}\geq 2S_{k+1}$ for $k\geq 2k_0$
 and thus $\sum _{k> 2k_0} S_k\leq 2S_{\lceil 2k_0\rceil}$
  \begin{eqnarray*}
    f_2
    & \geq &  \frac{\sum_{k\in [k_0, 2k_0]}D_C(k, n-k)S_{k}}{\sum _{k>k_0}S_k} 
   = \frac{f'\sum_{k\in [k_0, 2k_0]}S_{k}}{\sum _{k\in [k_0, 2k_0]}S_k +\sum _{k> 2k_0} S_k } \\
   &\geq& \frac{f'\sum_{k\in [k_0, 2k_0]}S_{k}}{2\sum _{k\in [k_0, 2k_0]}S_k} = f'/2
  \end{eqnarray*}

  The bounds on $f_1$ and $f_2$ implies that the mean value $f$ over the entire region is between $f_1$ and $f_2$, Thus we have proved that 
  $$ (\sqrt{2}-1)\sqrt{\pi}n^{3/4}(2n-1)
       (1+O(n^{-1/4}))\leq  \min (f'/2, f_1) \leq 
       \overline{D}_C(n),$$
       and $$ \overline{D}_C(n)\leq \max (f_1, f_2)
       \leq 2\sqrt{\pi}n^{3/4}(2n-1)
       (1+O(n^{-1/4})).$$
 \end{proof}
\subsection{Bounds on the Sackin index for random simplex network}

Recall that 
$K(N)=\sum_{\ell\in {\cal L}(N)}d(\ell)$
for a network $N$.

\begin{proposition}
\label{prop33}
 For a simplex network
 $N\in {\cal OC}_{k, j}$, 
 $$K(N) \leq A_C(N) +1 \leq 2 K(N)$$
 where $k\geq 1, j\geq 0$. 
\end{proposition}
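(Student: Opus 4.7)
The plan is to exploit the tree-child property of simplex networks (established in Section~2): no tree node may have two reticulation children, for such a node would have no tree-or-leaf child. Hence every $t\in\mathcal{T}(N)$ is either \emph{Case A} (both children are non-reticulations) or \emph{Case B} (exactly one child is a reticulation). Let $A\subseteq\mathcal{T}(N)$ denote the Case A set, so $|A|=n-r-1$ and the Case B set has size $2r$. For each reticulation $r$, its two parents lie in $C(N)$; write $p^*(r)$ for the deeper and $p_{\min}(r)$ for the shallower. Since each Case B node has exactly one reticulation child, the maps $r\mapsto p^*(r)$ and $r\mapsto p_{\min}(r)$ are both injective, and their images partition the Case B set. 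With $\ell(r)$ denoting the leaf child of $r$, one has $d_N(\ell(r))=d(p^*(r))+2$.

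Splitting $\mathcal{L}(N)$ into $\mathcal{L}_1:=\mathcal{L}(N)\cap C(N)$ and the reticulation-leaves $\{\ell(r):r\in\mathcal{R}(N)\}$, I would write
\[
K(N)=\sum_{\ell\in\mathcal{L}_1}d(\ell)+\sum_{r}(d(p^*(r))+2),\qquad A_C(N)=\sum_{\ell\in\mathcal{L}_1}d(\ell)+\sum_{t\in\mathcal{T}(N)}d(t),
\]
and using the partition $\mathcal{T}(N)=A\sqcup\{p^*(r)\}_r\sqcup\{p_{\min}(r)\}_r$ extract the master identity
\[
K(N)-A_C(N)=2r-\sum_{t\in A}d(t)-\sum_{r}d(p_{\min}(r)).
\]
For the upper bound $K(N)\le A_C(N)+1$ it suffices to show $\sum_{t\in A}d(t)+\sum_r d(p_{\min}(r))\ge 2r-1$. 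Since $N$ is planted and tree-child, the root's unique child is the only tree node at depth $1$; this node is parent of at most one reticulation, so at most one $p_{\min}(r)$ lies at depth $1$, while the remaining $r-1$ shallower parents have depth $\ge 2$. Therefore $\sum_r d(p_{\min}(r))\ge 1+2(r-1)=2r-1$, and $\sum_{t\in A}d(t)\ge 0$ finishes the bound.

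For the lower bound $A_C(N)+1\le 2K(N)$ the master identity rearranges to
\[
2K(N)-A_C(N)=\Bigl(\sum_{\ell\in\mathcal{L}_1}d(\ell)-\sum_{t\in A}d(t)\Bigr)+\sum_{r}|d(p_1(r))-d(p_2(r))|+4r,
\]
so it suffices to establish the tree identity $\sum_{\ell\in\mathcal{L}_1}d(\ell)-\sum_{t\in A}d(t)=2n-1$. I would derive this from the subtree invariant $k_w-a_w=1$ for every $w\in C(N)$, where $k_w,a_w$ count $\mathcal{L}_1$-leaves and Case A nodes respectively in the subtree of $C(N)$ rooted at $w$: an $\mathcal{L}_1$-leaf gives $1-0=1$; a Case A node gives $(k_{w_1}-a_{w_1})+(k_{w_2}-a_{w_2})-1=1$; a Case B node passes the invariant through unchanged. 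Since each edge $(u,w)\in E(C(N))$ contributes $k_w-a_w$ to $\sum_{\ell}d(\ell)-\sum_{t\in A}d(t)$, summing over the $|E(C(N))|=2n-1$ edges of $C(N)$ produces the identity. Both remaining terms in the rearrangement are non-negative, so $2K(N)-A_C(N)\ge 2n-1\ge 1$, completing the lower bound.

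The only mildly subtle step is spotting the subtree invariant $k_w-a_w=1$; the tree-child constraint is used crucially throughout, both for the injectivity that supports the master identity and for the depth-$1$ uniqueness used in the upper bound.
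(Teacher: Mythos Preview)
Your proof is correct. The upper-bound half is essentially the paper's argument repackaged: both show that at most one reticulation can have its shallower parent at depth $1$ (the root's unique child), whence $\sum_r d(p_{\min}(r))\ge 2r-1$ and $K(N)\le A_C(N)+1$ follows.

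For the lower bound you take a genuinely different route. The paper invokes a separate lemma (Appendix~B): there is a bijection $\phi$ from $\{\rho\}\cup A$ to $\mathcal{L}_1$ sending each node to a descendant leaf, which gives the inequality $\sum_{t\in A}d(t)\le\sum_{\ell\in\mathcal{L}_1}d(\ell)$ and, combined with $d(p'_i)+d(p''_i)<2d(k+i)$, yields $A_C(N)+1\le 2K(N)$. You instead establish the \emph{exact} identity $\sum_{\ell\in\mathcal{L}_1}d(\ell)-\sum_{t\in A}d(t)=|E(C(N))|=2n-1$ via the subtree invariant $k_w-a_w=1$ and edge double-counting. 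This is cleaner: it is self-contained (no auxiliary bijection lemma), and it in fact yields the sharper statement $2K(N)-A_C(N)\ge 2n-1+4r$, whereas the paper only extracts $\ge 1$. The bijection approach, on the other hand, is a reusable structural fact about planted binary trees that the paper may value for other purposes.
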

\begin{proof}
 Let $N\in {\cal OC}_{k, j}$. 
  If $j=0$, $N$ does not contain any reticulation node and thus every node of $N$ is in the top tree component of $N$. By definition, 
  $K(N) \leq A_C(N)$. 
  Since $N$ is a phylogenetic tree,   $N$ contains  the same number of internal nodes (including the root $\rho$) as the number of leaves.  By induction, we can prove that there exists a 1-to-1 mapping $\phi: {\cal T}(N) \cup \{\rho\}\rightarrow {\cal L}(N)$ such that  $u$ is an ancestor of  $\phi(u)$ for every $u$ (see {Appendix B})
  Noting that $d_N(\rho)=0$ and $d_{N}(\phi(\rho))\geq 1$, we have:
    $$A_C(N) +1
     \leq \sum_{\ell\in {\cal L}(N)} d_N(\ell) 
     + \sum_{\ell\in {\cal T}(N)} d_N(\phi(u))=2K(N).$$
     
 We now generalize the above proof for phylogenetic trees to the general case where $j>0$ as follows.
  
  We assume that $r_1,r_2, \cdots, r_j$ are the $j$ reticulation nodes and their parents are $p'_j$ and $p''_j$. 
  Since $N$ is simplex, $p'_j$ and $p''_j$  are both found in the top tree components.
  Clearly, $d_{N}(p'_j)\geq 1, d_N(p''_j)\geq 1$. 
 Since Leaf $(k+j)$ is the child of $r_j$, 
 $d_N(k+j)=\max (d_{N}(p'_j), d_{N}(p''_j))+2 
   \leq d_{N}(p'_j) + d_{N}(p''_j)$ unless 
   $\min (d_{N}(p'_j), d_{N}(p''_j))=1$.

  If $d_{N}(p'_i)=\min (d_{N}(p'_i), d_{N}(p''_i))=1$,   $p'_i$ is the unique child of the root $\rho$. This implies that $N$ contains at most an $i$ for which $\min (d_{N}(p'_i), d_{N}(p''_i))=1$.
  Since $N$ is tree-child, 
  the parents $p'_i$ and $p''_i$ are distinct for different $i$, 
  $K(N)\leq A_C(N)+1.$

  Without loss of generality, we may further assume 
  $d'_N(p'_1)=1$.  The tree-component $C(N)$ contains 
  $k +2j$ internal tree nodes including $\rho$.  
  We set 
  $${\cal T}(C(N))\setminus \{p'_i, p''_i : 1\leq i\leq j\}=\{\rho, u_1, \cdots, u_{k-1}\}.$$
  Again, there is an 1-to-1 mapping $\phi$ from 
  $\{\rho, u_1, \cdots, u_{k-1}\}$ to ${\cal L}(C(N))=\{1, 2, \cdots, k\}$ such that the leaf $\phi(u_i)$ is a descendant of $u_i$. 
  Therefore, since $d_{N}(\rho)=0$ and $d_N(\phi(\rho))\geq 1$, 
  \begin{eqnarray*}
   d_N(\rho)+1 &\leq &  d_{N}(\phi(\rho)); \\
   d_{N}(p'_i)+d_{N}(p''_i)
  &\leq &  2\max(d_{N}(p'_i), d_{N}(p''_i)) < 2 d_N(k+i), 1\leq i\leq j;\\
  d_{N}(u_i)+d_{N}(\phi(u_i))&\leq& 2d_{N}(\phi(u_i)), 1\leq i\leq k-1.
  \end{eqnarray*}
  Therefore,
   $A_C(N)+1\leq  2K(N).$
\end{proof}

\begin{theorem}
 The expected Sackin index $\overline{K}({\cal OC}_{n})$ of a simplex network on $n$ taxa  is $\Theta(n^{7/4})$.
\end{theorem}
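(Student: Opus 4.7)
The plan is to derive the result from Propositions 3.2 and 3.3 via a direct averaging argument, with no new technical ingredient required. Proposition 3.3 provides the key pointwise sandwich $K(N) \le A_C(N)+1 \le 2K(N)$ valid for every $N \in {\cal OC}_{k,j}$, uniformly in $k$ and $j$. Since ${\cal OC}_n$ is partitioned (after the ${n \choose k}$ choices of labels for the children of reticulations) into the subfamilies ${\cal OC}_{k, n-k}$ for $k = 1, \ldots, n$, I would sum the pointwise inequality over every $N$ weighted by its uniform measure and divide by $o_n$. Linearity of expectation preserves the inequalities and yields
$$ \tfrac{1}{2}\left(\overline{D}_C(n) + 1\right) \;\le\; \overline{K}({\cal OC}_n) \;\le\; \overline{D}_C(n)+1. $$

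Next I would invoke Proposition 3.2, which supplies the two-sided asymptotic bounds
$$ (\sqrt{2}-1)\sqrt{\pi}\, n^{3/4}(2n-1)(1+O(n^{-1/4})) \;\le\; \overline{D}_C(n) \;\le\; 2\sqrt{\pi}\, n^{3/4}(2n-1)(1+O(n^{-1/4})), $$
so that $\overline{D}_C(n) = \Theta(n^{7/4})$. Substituting these into the sandwich above and absorbing the harmless additive $1$ produces $\overline{K}({\cal OC}_n) = \Theta(n^{7/4})$, delivering both the $\Omega(n^{7/4})$ lower bound promised in the abstract and the matching $O(n^{7/4})$ upper bound.

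The main obstacle is not really a new one, since the heavy lifting has already been carried out: Theorem 3.1 provides the exact formula for $A_C({\cal OC}_{k,j})$, Proposition 3.2 handles the delicate peak-region estimate around $k_0 = \sqrt{n+1}-1$ that is responsible for the exponent $7/4$ (rather than $3/2$ as for phylogenetic trees), and Proposition 3.3 establishes the combinatorial equivalence, up to a factor of two, between leaf depths and top-tree-component ancestor counts. The only point worth verifying carefully is that Proposition 3.3 is uniform across all $(k,j)$, so that one may average over all of ${\cal OC}_n$ at once rather than handling each ${\cal OC}_{k, n-k}$ separately; this is immediate from the statement. A secondary sanity check is that the constant factor $2$ in the pointwise comparison does not blow up the exponent — also immediate, since it only affects the implied constants in $\Theta(n^{7/4})$.
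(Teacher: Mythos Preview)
Your proposal is correct and follows essentially the same approach as the paper: sum the pointwise sandwich of Proposition~3.3 over all networks to obtain $\tfrac{1}{2}(\overline{D}_C(n)+1)\le \overline{K}({\cal OC}_n)\le \overline{D}_C(n)+1$, then invoke the two-sided bound of Proposition~3.2 to conclude $\overline{K}({\cal OC}_n)=\Theta(n^{7/4})$. The paper carries out the summation stratum by stratum over ${\cal OC}_{k,n-k}$ before combining, but this is only a cosmetic difference from your direct averaging.
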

\begin{proof}
We define 
$$ K({\cal OC}_{k, j}) =\sum_{N\in {\cal OC}_{k, j}}K(N),$$
where $k\geq 1$ and $j\geq 0$. By Proposition~\ref{prop33}, 
  \begin{eqnarray*}
   K({\cal OC}_{k, j}) \leq A_{N}({\cal OC}_{k, j})+ |{\cal OC}_{k, j}| \leq 2K({\cal OC}_{k, j}).
   \end{eqnarray*}
   
   By Eqn.~(\ref{eqn16_prop32}), we have: 
   \begin{eqnarray*}
   \overline{K}({\cal OC}_{n}) &=&   \frac{1}{
     \sum^{n}_{k=1}{n\choose k} |{\cal OC}_{k, n-k}|}\left(\sum^{n}_{k=1}{n\choose k} K\left({\cal OC}_{k, n-k}\right)  \right)\\
     &\leq &  \frac{1}{
     \sum^{n}_{k=1}{n\choose k} |{\cal OC}_{k, n-k}|}\left(\sum^{n}_{k=1}{n\choose k} A_C\left({\cal OC}_{k, n-k}\right) \right) +1\\
     &=& \overline{D}_C(n)+1 \\
     &=& 4\sqrt{\pi}n^{7/4}+1.
  \end{eqnarray*}
  Similarly, using Eqn.~(\ref{eqn16_prop32}),  we have that 
  $$ 2\overline{K}({\cal OC}_{n}) \geq
     \overline{D}_C(n) = 2(\sqrt{2}-1)\sqrt{\pi}n^{7/4}+O(1),$$
    equivalently,
    $$\overline{K}({\cal OC}_{n}) \geq  (\sqrt{2}-1)\sqrt{\pi}n^{7/4}+O(1).$$
    This concludes the proof.
\end{proof}

\section{Conclusion}
\label{sec4}


What facts about phylogenetic trees remain valid for phylogenetic networks is important in the study of phylogenetic networks. 
In this paper, an asymptotic estimate (up to constant ratio) for the expected Sackin index of a  simplex network is given in the uniform model. This study raises a few research problems. First,  the expected Sackin index of tree-child networks over $n$ taxa is still unknown.    It is also interesting to investigate the Sackin index for galled trees, galled networks and other  classes of networks
(see \cite{Zhang_18} for example).

Secondly, it is even more challenging to estimate the expected height of simplex networks and tree-child networks.
It is interesting to see whether or not the approach introduced by Stufler \cite{stufler2021} can be used to answer this question.


 \bibliographystyle{elsarticle-num} 
 \bibliography{Refs_RecurrenceFormula}

\newpage

 \section*{Appendix}
 
 \subsection*{Appendix A}
 
\noindent {\bf Proposition A.1.}
  Let $ S_{k}=   {n\choose k} \frac{(2n-2)!}{2^{n-1}(k-1)!} $ and $k_0=\sqrt{n+1}-1$.  Then, 
 \begin{itemize}
     \item $S_{k}\leq S_{k+1}$ for $k\in [1, k_0]$;
     \item  $S_{k} > S_{k+1}$ if $k_0<k<2k_0$;  
  \item  
  $S_{k}>2S_{k+1}$ if $k\in [2k_0, \infty)$.
  \end{itemize}

 \begin{proof}
 Note that
   $$S_{k+1}=\frac{n-k}{k(k+1)}S_k.$$
   If $k\leq k_0$, $(k+1)^2\leq (\sqrt{n+1}-1+1)^2=n+1$ and, equivalently, 
   $k(1+k)\leq n-k$ and thus  ${S_{k+1}}=\frac{n-k}{k(k+1)}S_k\geq S_k$.
  Similarly, ${S_{k+1}}< S_k$ if $k>k_0$.
  
 If $k\geq 2k_0$, $(k+2)^2\geq 4(n+1)$ and $k^2+k\geq 4(n-k)+k\geq 4(n-k)$ and therefore, ${S_{k+1}}\leq \frac{1}{4} S_k< \frac{1}{2} S_k$.
 \end{proof}
 
 \noindent {\bf Proposition A.2.}
    $D_C(k, n-k)$ is a decreasing function of $k$ on 
    $[1, n-1]$.

 \begin{proof}
 By Eqn.~(\ref{eqn1_zlx}) and Eqn.~(\ref{eqnXXX_count}), 
    $$D_C(k, n-k)=\frac{n(2n-1)}{k}\left( \frac{4^kk!k!}{(2k)!}-1\right).
  $$
  Hence,
 \begin{eqnarray*}
   && n(2n-1)[D_C(k, n-k)-D(k+1, n-k-1)]\\
   &=&  \frac{2^{2k}(k-1)!k!}{(2k)!} - \frac{2^{2k}4k!(k+1)!}{(2k+2)!} - \frac{1}{k(k+1)}\\
   &=& \frac{2^{2k}k!(k-1)!}{(2k+1)!} - \frac{1}{k(k+1)}\\
   &=& \frac{1}{k(k+1)}\left[\frac{2^{2k}k!(k+1)!}{(2k+1)!} - 1\right]\\
   &=& \frac{1}{k(k+1)}\left[ \frac{  4\times 6\times \cdots 2k\times (2k+2)}{ 3\times 5\times  \cdots (2k+1)} -1\right]\\
  &>& 0.
 \end{eqnarray*}
  
 \end{proof}

\subsection*{Appendix B}

\noindent {\bf Proposition B.1.}
Let $P$ be a phylogenetic tree on $n$ taxa.
there exists a 1-to-1 mapping $\phi: {\cal T}(P)\cup \{\rho\}\rightarrow {\cal L}(P)$ such that  $u$ is an ancestor of  $\phi(u)$ for each $u\in {\cal T}(P) \cup \{\rho\}$.

\begin{proof}
  We prove the fact by mathematical induction on $n$. When $n=1$, we simply map the root 
  $\rho$ to the only leaf.  
  
 Assume the fact is true for $n\leq k$, where $k\geq 1$. For a phylogenetic tree $P$ with $k+1$ leaves, we let the child of the root $\rho$ be $u$ and the two grandchildren be $v$ and $w$. We consider the subtree $P'$ induced by  $u$, $v$ and all the descendants of $v$ and the subtree $P''$ induced by $u$, $w$ and all the descendants of $w$.  
 
 Obviously, both $T'$ and $T''$ have less than $k$ leaves. By induction, there is a 1-to-1 mapping $\phi': {\cal T}(P')\cup \{u\} \rightarrow {\cal L}(P')$ satisfying the constraints on leaves, and
 there is a 1-to-1 mapping $\phi'': {\cal T}(P'')\cup \{u\} \rightarrow {\cal L}(P'')$ satisfying the constraints on leaves.
 Let $\phi'(u)=\ell$. Then, the function that maps $\rho$ to $\ell$, $u$ to $\phi''(u)$ and all the other tree nodes $x$  to $\phi'(x)$ or $\phi''(x)$ depending whether it is in $P'$ or $P''$.  It is easy to verify that $\phi$ is a desired mapping.
\end{proof}

\end{document}